\long\def\symbolfootnote[#1]#2{\begingroup%
\def\thefootnote{\fnsymbol{footnote}}\footnote[#1]{#2}\endgroup}
\newtheorem{theorem}{Theorem}[section]
\newtheorem{definition}{Definition}
\definecolor{lightblue}{rgb}{.7, .8, 1}
\definecolor{lightgreen}{rgb}{.6, 1, .6}
\definecolor{brown}{rgb}{1,0.38,0.03}
\definecolor{OliveGreen}{rgb}{.2,0.6,0.2}
\definecolor{BrickRed}{rgb}{.7,0.2,0.2}
\newcommand{\ignore}[1]{} 
\long\def\symbolfootnote[#1]#2{\begingroup%
\def\thefootnote{\fnsymbol{footnote}}\footnote[#1]{#2}\endgroup}
\newcommand{\bsp}{\begin{split}}
\newcommand{\esp}{\end{split}}
\begin{document}

\sloppy
\ninept

\title{Cyclostationary Statistical Models and Algorithms for Anomaly Detection Using Multi-Modal Data}

\name{Taposh Banerjee$^{\star}$ \; Gene Whipps$^{\dagger}$ \; Prudhvi Gurram$^{\dagger \ddagger}$ \; and \; Vahid Tarokh$^{\pm}$ 
\thanks{The work of Taposh Banerjee and Vahid Tarokh was supported
by a grant from the Army Research Office, W911NF-
15-1-0479.}
}
\address{$^{\star}$ School of Engineering and Applied Sciences, Harvard University\\
    $^{\dagger}$ U.S. Army Research Laboratory\\
    $^{\ddagger}$ Booz Allen Hamilton\\
    $^{\pm}$Department of ECE, Duke University
    }
\maketitle

\begin{abstract}
A framework is proposed to detect anomalies in multi-modal data. A deep neural network-based object detector is employed 
to extract counts of objects and sub-events from the data. A cyclostationary model is proposed to model regular patterns of behavior 
in the count sequences. The anomaly detection problem is formulated as a problem of detecting deviations from learned cyclostationary behavior. 
Sequential algorithms are proposed to detect anomalies using the proposed model. The proposed algorithms are shown to be asymptotically efficient 
in a well-defined sense. The developed algorithms are applied to a multi-modal data consisting of CCTV imagery and social media posts to detect 
a 5K run in New York City. 
\end{abstract}

\begin{keywords}
Nonstationary behavior, change detection, deep neural networks, multi-modal data, count data
\end{keywords}

\section{Introduction}
Many real-life anomaly detection problems including surveillance, infrastructure monitoring, environmental and natural disaster monitoring, border security using unattended ground sensors, crime hot-spot detection for law enforcement, and real-time traffic monitoring involve multi-modal data. For example, in a traffic monitoring application, 
a decision maker who wishes to detect abnormal behavior or impending congestions, may have access to CCTV imagery data, social media data, and other physical sensor data. 
For such applications, efficient algorithms are needed that can detect anomalies or deviations from normal behavior as quickly as possible. 
Effective algorithms can be developed only when one has access to, and has a good understanding of, the multi-modal data encountered in these applications. 
Motivated by this, in this paper, we develop statistical models and algorithms for detecting anomalous behavior in multi-modal data. 
The statistical models studied here are motivated by an analysis of a real-life multi-modal traffic monitoring dataset. 

The datasets studied in this paper were collected by us around a 5K run that occurred in New York City on Sunday, September 24th, 2017. We collected data on two Sundays before the run, and one Sunday after the run. We collected CCTV images and Twitter and Instagram posts over a geographic region from the Red Hook village in Brooklyn on the south end to the Tribeca village on the north end of the collection area. An analysis of the data reveals that the 5K run changes the averages of counts of persons and vehicles appearing in the CCTV cameras and the number of Instagram posts per second posted in the geographical areas near the run. The counts of persons and vehicles appearing in the CCTV images were obtained by passing the images through a convolution neural network-based object detector \cite{bane-fusion-2018}, \cite{fasterrcnn2015}, \cite{vgg162014}, \cite{Everingham10}. See Fig.~\ref{fig:countSeq}. The analysis also suggests that the data has periodic or cyclostationary behavior (see Section~\ref{sec:Data} for more details). In general, in many monitoring applications, a certain cyclostationary behavior is expected, especially while observing long-term patterns of life, unless an unexpected event occurs.
\vspace{-0.5cm}
\begin{figure}[H]
\center
\includegraphics[scale=0.2]{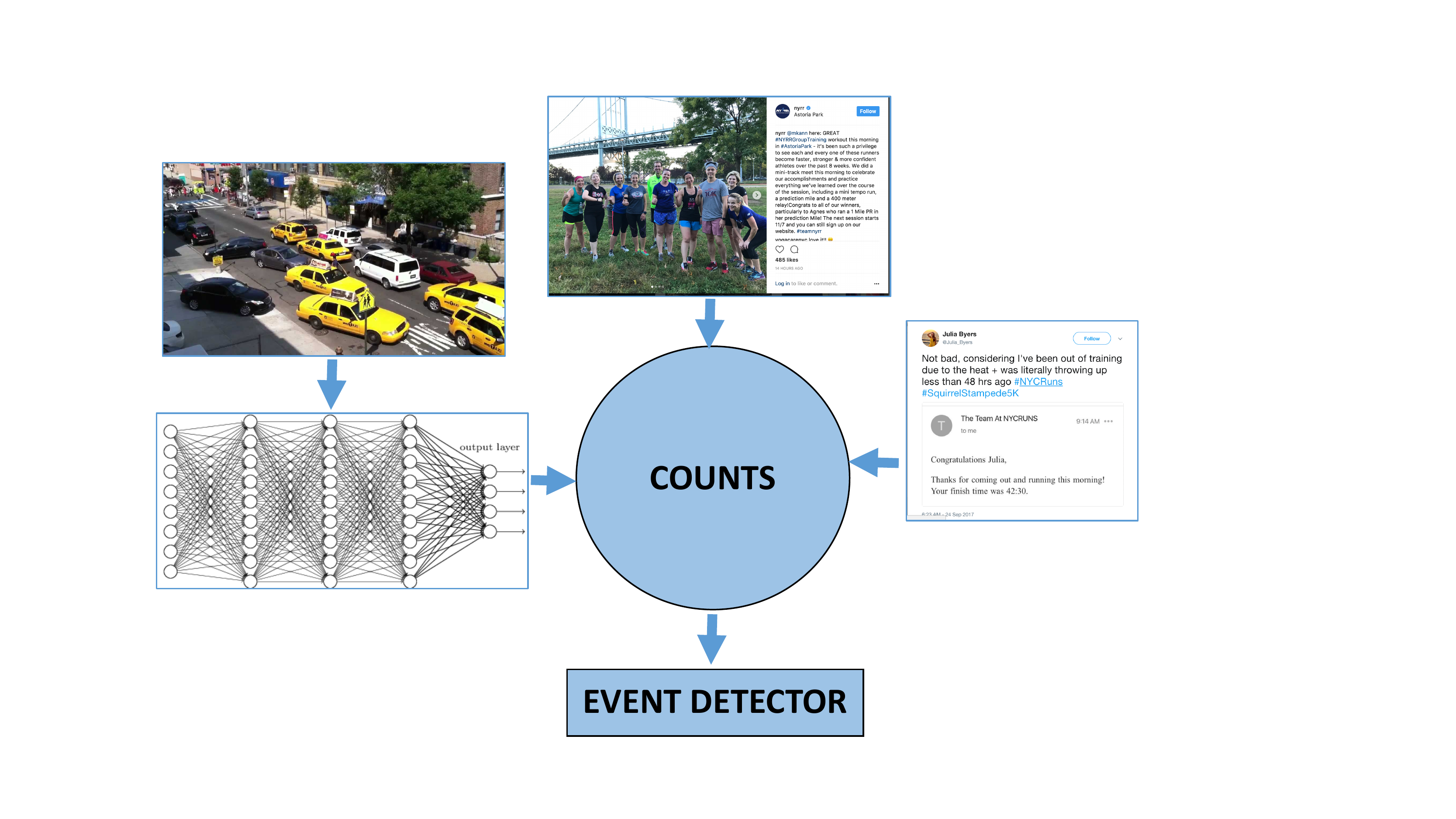}
\caption{Mapping multi-modal data to a sequence of counts}\vspace{-0.2cm}
\label{fig:countSeq}
\end{figure}

In this paper, we define a statistical model to capture the cyclostationary behavior. We also develop sequential algorithms to detect deviations away from learned cyclostationary behavior. 
We develop the sequential algorithms in the framework of quickest change detection \cite{veer-bane-elsevierbook-2013}, \cite{poor-hadj-qcd-book-2009}, \cite{tart-niki-bass-2014}, and also provide their delay and false alarm analysis. 
The salient features of our paper are as follows.
\begin{enumerate}[wide, labelwidth=!, labelindent=0pt]

\item We use a novel framework introduced by us in \cite{bane-fusion-2018} for decision making using multi-modal data involving CCTV images and social media data. In this framework, we use a deep neural network to extract counts of objects from the images. This count is then combined with counts of the number of Tweets and Instagram posts near the CCTV cameras. The decision making is then based on the sequence of counts. 

\item We define the concept of an independent and periodically identically distributed (i.p.i.d) process. We model the count data as an instance of an i.p.i.d. process. We then propose novel algorithms to detect deviations from learned i.p.i.d. behavior. See Definition~\ref{def:ipid}.

\item We define the concept of asymptotic efficiency for a change point detection algorithm and show that our proposed algorithms are asymptotically efficient. See Definition~\ref{def:asyefficient}. 

\item Machine learning and signal processing algorithms for event detection have been developed in the literature \cite{panda2017}, \cite{lee2014}, \cite{szechtman2008}, \cite{neill2007}, \cite{mitchell2013}, \cite{dandrea2015}, \cite{dereszynski2012} \cite{sakaki2010}. However, in these studies, the abnormal event is often either well-defined and/or can be created to train a model. Since the algorithms proposed by us are based on detecting deviations from learned normal behavior, our framework allows for decision making in rare-event scenarios where
the anomalous behavior is hard to learn. 

\end{enumerate}

\vspace{-0.6cm}

\section{Data Analysis}\label{sec:Data}
\vspace{-0.3cm}

Details of the data collected, including information on the deep neural network employed, timings and frame rates can be found in our previous work \cite{bane-fusion-2018}. The objective is to detect the 5K run from the multi-modal data collected. In Figs.~\ref{fig:Person} to Figs.~\ref{fig:Instagram} below, we have plotted averages of the count data collected on the four days, one event day (Sept. 24), and three non-event day (Sept. 10, Sept. 17, and Oct. 1). The data were extracted in 3-second intervals and averaged over a sliding window of size 1000. The figures show plots for two selected cameras: one which was away from the path of the run called the off-path camera, and one which was near the path of the run. The latter is called the on-path camera. 

In Fig.~\ref{fig:PersonOff}, we have plotted the average person count for the off-path camera and in Fig.~\ref{fig:PersonOn}, we have plotted the average person count for the on-path camera. Similar plots for the average vehicle counts are shown in Fig.~\ref{fig:VechicleOff} and Fig.~\ref{fig:VechicleOn}, and for Instagram counts are shown in Fig.~\ref{fig:InstagramOff} and Fig.~\ref{fig:InstagramOn}. The Instagram counts in Fig.~\ref{fig:Instagram} were obtained by averaging the counts for the Instagram posts near the geographical vicinity of the off-path and on-path cameras. 
\begin{figure}
\centering
\begin{subfigure}{.25\textwidth}
  \centering
  \includegraphics[width=\linewidth]{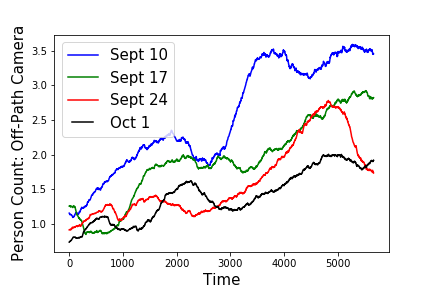}
  \caption{Average person counts for \\ an off-path camera}
  \label{fig:PersonOff}
\end{subfigure}%
\begin{subfigure}{.25\textwidth}
  \centering
  \includegraphics[width=\linewidth]{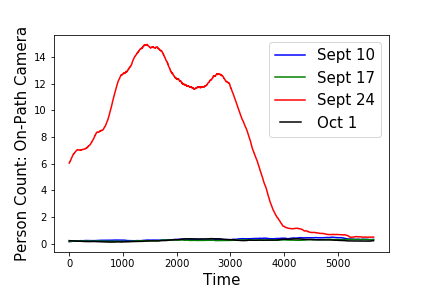}
  \caption{Average person counts for an \\ on-path camera}
  \label{fig:PersonOn}
\end{subfigure}
\caption{Average person counts for the four event days for two cameras: one on the path of the event and one outside the path.}
\label{fig:Person}
\end{figure}
\begin{figure}
\centering
\begin{subfigure}{.25\textwidth}
  \centering
  \includegraphics[width=\linewidth]{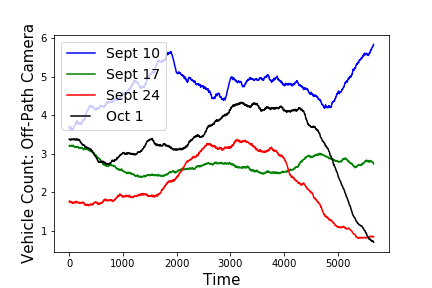}
  \caption{Average vehicle counts for \\ an off-path camera}
  \label{fig:VechicleOff}
\end{subfigure}%
\begin{subfigure}{.25\textwidth}
  \centering
  \includegraphics[width=\linewidth]{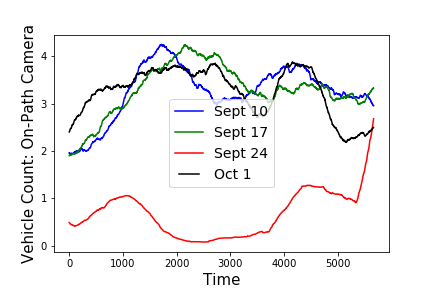}
  \caption{Average vehicle counts for \\ an on-path camera}
  \label{fig:VechicleOn}
\end{subfigure}
\caption{Average vehicle counts for the four event days for two cameras: one on the path of the event and one outside the path.}
\label{fig:Vechicle}
\end{figure}
\begin{figure}
\centering
\begin{subfigure}{.25\textwidth}
  \centering
  \includegraphics[width=\linewidth]{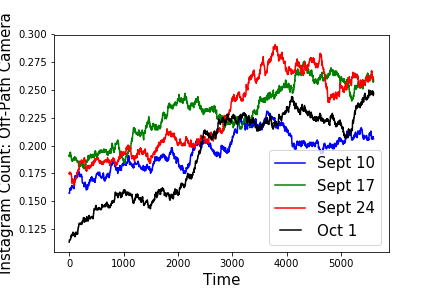}
  \caption{Average Instagram post \\ counts for an off-path camera}
  \label{fig:InstagramOff}
\end{subfigure}%
\begin{subfigure}{.25\textwidth}
  \centering
  \includegraphics[width=\linewidth]{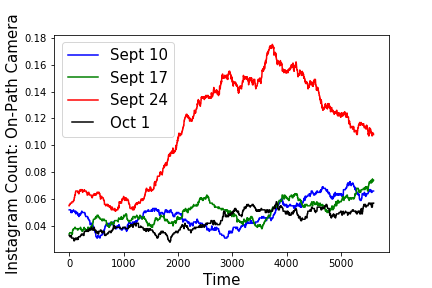}
  \caption{Average Instagram post \\ counts for an on-path camera}
  \label{fig:InstagramOn}
\end{subfigure}
\caption{Average Instagram post counts for the four event days originating near two CCTV cameras: one on the path of the event and one outside the path.}
\label{fig:Instagram}
\end{figure}
We see a clear increase in the average count on the event day for the on-path camera.  Thus, the 5K run event can be detected using the count sequences from both CCTV data and social media posts. More generally, we can expect counts and sequences of sub-events to capture information about anomalous behavior. For example, an event happening twice in a day or two events happening too close to each may indicate a deviation from normal behavior. 

We see from the figures that the data is nonstationary in nature, even 
on non-event days. Also, we observe similarity in statistical behavior in data across all four days from the off-path camera. We also see a similarity in behavior in the data from the on-path cameras on the non-event days. The data also have cyclic behavior. For example, the Instagram count data in Fig.~\ref{fig:InstagramOff} show that the data has a trend 
that repeats itself every Sunday. Thus, the anomaly detection problem here can be rephrased as either the problem of detecting deviations from normal nonstationary behavior 
or as the problem of detecting deviations from normal cyclostationary behavior. 
In \cite{bane-fusion-2018}, we studied a Bayesian problem that captures the problem of detecting changes in the levels of nonstationarity. In this paper, 
we study the latter problem. 

\vspace{-0.4cm}

\section{Mathematical Model and Problem Formulation}\label{sec:Model}
\vspace{-0.2cm}

The central modeling object in this paper is the following.  
\begin{definition}\label{def:ipid}
A stochastic process $\{Y_k\}$ is called independent and periodically identically distributed (i.p.i.d.) if 
the random variables are independent, and there is a positive integer $T$ such that for each $i = 1,  \cdots, T$, 
the process $\{Y_{i + nT}\}_{n=0}^\infty$ is independent and identically distributed (i.i.d). 
\end{definition}
An i.p.i.d. process can be seen as an interleaved version of $T$ i.i.d. stochastic processes, interleaved in a round-robin fashion. An i.p.i.d. process is a wide-sense cyclostationary process \cite{gardner2006cyclostationarity}, but has more structure that we will exploit to develop efficient algorithms. We model a count observation sequence as an i.p.i.d. process. Although counts are discrete in nature, the following discussion is valid for more general random variables as well. 

In our statistical model, the variables $\{Y_k\}$ in the i.p.i.d. process have distribution in a parametric family with parameters $\{\theta_k\}$, and the parameter sequence $\{\theta_k\}$ is periodic with period $T$. In other words, we have a sequence model
\begin{equation}\label{eq:model}\vspace{-0.2cm}
\begin{split}
Y_k & \stackrel{\text{ind}}{\sim} p(\cdot; \theta_k), \; \forall k, \\
\theta_k &= \theta_{k+T},\; \forall k.
\end{split}\vspace{-0.2cm}
\end{equation}
If the data is collected once per hour, then in the above model, the period $T$ would correspond to $T= 24$ hours in a day, and the variables $Y_1, \cdots, Y_{T}$ could correspond to the data collected each hour. In many applications, 
the data is often collected more frequently, at the rate of many samples per second. In such applications, $T$ could be, for example, equal to $24 \times 60 \times 60 \times m$, where $m$ is the number of samples collected per second. Note that the statistical model in \eqref{eq:model} has only $T$ parameters $\theta_1, \cdots, \theta_T$. 

The statistical problem we wish to solve is described as follows. Given the 
parameters $\theta_1, \cdots, \theta_T$, the objective is to observe the process
$\{Y_k\}$ sequentially over time $k$ and detect any changes in the values of any of the parameters. This change has to be detected in real-time with minimum possible delay, subject to a constraint on the rate of false alarms. 
The baseline parameters in the problem, the period $T$ and the parameters within a period $\theta_1, \cdots, \theta_T$, can be learned from the training data. General tests for learning an i.p.i.d. process will be reported elsewhere. In this paper, 
we will make additional modeling assumptions to make the learning process simpler. 

Note that the sequence model \eqref{eq:model} studied in this paper is different from the sequence model studied in \cite{Johnstone2015Book} and \cite{tsybakov2009introduction}. In the model studied in \cite{Johnstone2015Book} and \cite{tsybakov2009introduction}, the random variables $\{Y_k-\theta_k\}$ are modeled as Gaussian random variables and the parameters $\{\theta_k\}$ are not periodic. Furthermore, the problem there is of simultaneous estimation of all the different parameters $\{\theta_k\}$ given all the observations $\{Y_k\}$. That is, the problem is not sequential in nature. It is also not a change point problem.

To summarize, in the absence of an anomaly, we model the data as a nonstationary process. But, we believe there is some regularity in the statistical properties of the process. This allows us to model the data as a cyclostationary process. The type of cyclostationary behavior we are interested in is captured by the i.p.i.d. process defined above. The objective in the anomaly detection problem then is to detect a deviation away from a learned cyclostationary or i.p.i.d. behavior.

The algorithm to be used for change detection will depend on the pattern of changes that we assume in the statistical model. We now discuss two change point models for our problem. As discussed above, if the number of samples taken per second is $m$ and the statistical behavior of the data repeats itself after one week, then we have $T=604800m$. 
In practice, it may be hard to learn a large number of parameters, and detect changes in them.
In order to control the complexity of the problem, we 
assume that the parameters are divided into batches and parameters in each batch are approximately constant. For example, a batch may correspond to data collected in an hour 
and the average count of objects may not change in an hour. Mathematically, we assume that in each cycle or period of length $T$, the vector of parameters
 $\{\theta_k\}_{k=1}^T$ is partitioned into $E$ batches or episodes. 
Specifically, for $N_0=0$ and positive integers $\{N_e\}_{e=1}^E$ we define 
$B_e = \{N_{e-1} + 1, \cdots, N_e\}$ such that 
$\{1, \cdots, T\} = \cup_{e=1}^E B_e, \quad B_e \cap B_f = \emptyset, \text{ for } e \neq f.$
For $e\in \{1, \cdots, E\}$, we define
$\theta_{B_e} = (\theta_{N_{e-1}+1}, \cdots, \theta_{N_e}).$
Thus, $\{\theta_k\}_{k=1}^T$ is partitioned as
\begin{equation}\label{eq:BatchModel}
\overbrace{\theta_1, \cdots, \theta_{N_1}}^{\theta_{B_1}},\overbrace{\theta_{N_1+1}, \cdots, \theta_{N_2}}^{\theta_{B_2}}, \cdots, \overbrace{\theta_{N_{E-1}+1}, \cdots, \theta_{N_E}}^{\theta_{B_E}}. 
\end{equation}
Note that we have $T = \sum_{e=1}^E |B_e|$. 

We further assume a step model for parameters. Under this assumption, the parameters remain constant within a batch resulting in the step-wise constant sequence model
\begin{equation}\label{eq:StepModel}
\overbrace{\theta^{(1)}, \cdots, \theta^{(1)}}^{\theta_{B_1}},\overbrace{\theta^{(2)}, \cdots, \theta^{(2)}}^{\theta_{B_2}}, \cdots, \overbrace{\theta^{(E)}, \cdots, \theta^{(E)}}^{\theta_{B_E}}. 
\end{equation}
That is $\theta^{(1)} = \theta_1=\cdots=\theta_{N_1}$, $\theta^{(2)} = \theta_{N_1+1}=\cdots= \theta_{N_2}$, and so on. Thus, if the batch sizes are large, there are only $E \ll T$ parameters to learn from the data. Also, we have $|B_e|$ samples for batch $e$. 
The objective is then to observe the process $\{Y_k\}$ over time and detect any changes in the parameters $\theta^{(1)}, \cdots, \theta^{(E)}$. 

We now define two change point models. Let $\gamma$ be the change point. If $\gamma=\infty$, i.e., no change occurs, then the stochastic process that we observe, and the parameter values, are given by 
\begin{equation}\label{eq:Normal}
\begin{split}
\overbrace{\theta^{(1)}, \quad \cdots, \theta^{(1)}}^{\theta_{B_1}}, &\overbrace{\theta^{(2)}, \cdots, \quad \theta^{(2)}}^{\theta_{B_2}} \\ 
{Y_1, \; \; \cdots, Y_{N_1}},& {Y_{N_1+1}, \cdots, Y_{N_2}}, \cdots\\ &\overbrace{\quad \theta^{(E)} \quad, \cdots,   \theta^{(E)}}^{\theta_{B_E}}
\quad \overbrace{\theta^{(1)} \quad , \cdots, \quad \theta^{(1)}}^{\theta_{B_1}}\\ &Y_{N_{T-1}+1}, \cdots, Y_{N_T}, Y_{N_{T}+1}, \cdots, Y_{N_{T+1}}.
\end{split}\vspace{-0.1cm}
\end{equation}
If $\gamma < \infty$, i.e., a change occurs at a finite time $\gamma$, we have two possible change point models. For $k \in \mathbb{N}$, we define the batch of $k$, $b(k)$, as the value $j$ satisfying $(k \text{ mod } T) \in B_j$.
\begin{enumerate}[wide, labelwidth=!, labelindent=0pt]
\item \textit{Change in parameter values in a single batch}: In this model, the distribution of
the random variables $\{Y_k\}$ changes only inside a specific batch say $e \in \{1, \cdots, E\}$. That is, 
in this model, starting at time $\gamma$, the parameter values at all the times change as long as the times fall in the 
batch $e$. Also, the post-change parameter $\lambda_k$ is different for each $k \geq \gamma$, even within a batch. 
Specifically, if $b(k)$ denotes the batch of $k$ then
\begin{equation}\label{eq:CP_single}\vspace{-0.2cm}
\begin{split}
Y_k & \sim p(\cdot\; ; \; \theta^{b(k)}), \text{ for } k < \gamma\\
       & \sim p(\cdot\; ; \; \theta^{b(k)}), \text{ for } k \geq \gamma, \; b(k)\neq e,\\
       & \sim p(\cdot\; ; \; \lambda_k), \text{ for } k \geq \gamma, \; b(k)=e, \text{ with } \lambda_k \neq \theta^{b(k)}. 
\end{split}\vspace{-0.2cm}
\end{equation}
The value of $e$ is not known to the decision maker. 

\item \textit{Change in parameter values in all the batches}:  In this model, the distribution of
the random variables $\{Y_k\}$ changes for all the batches. 
\begin{equation}\label{eq:CP_all}\vspace{-0.2cm}
\begin{split}
Y_k & \sim p(\cdot\; ; \; \theta^{b(k)}), \text{ for } k < \gamma\\
       & \sim p(\cdot\; ; \; \lambda_k), \text{ for } k \geq \gamma, \text{ with } \lambda_k \neq \theta^{b(k)}. 
\end{split}\vspace{-0.2cm}
\end{equation}
\end{enumerate}
In a traffic monitoring scenario, if $T$ corresponds to a day, the single batch change point model may correspond to 
an anomalous behavior between 7 am and 8 am everyday, while the all batch change point may correspond to 
an anomalous behavior throughout the day.

We wish to find a stopping time $\tau$ for the sequence $\{Y_k\}$ so as minimize some version of the average of the detection delay $\tau - \gamma$, 
with a constraint on the false alarm rate. A popular criterion studied in the literature is that by Pollak \cite{poll-astat-1985}
\begin{equation}\label{eq:Pollak}
\begin{split}
\min_\tau &\; \; \sup_\gamma \mathbb{E}_\gamma \; [\tau - \gamma \; | \; \tau > \gamma]\\
\text{Subj. to }& \; \; \mathbb{E}_\infty \; [\tau] \geq \beta,
\end{split}
\end{equation}
where $\mathbb{E}_\gamma$ denotes expectation with respect to the probability measure when the change occurs at time $\gamma$, 
and $\beta$ is a given constraint on the mean time to false alarm. 
Finding optimal solution to such minimax quickest change detection problem is generally hard \cite{veer-bane-elsevierbook-2013}, \cite{poor-hadj-qcd-book-2009}, \cite{tart-niki-bass-2014}. 
We, therefore, propose algorithms (stopping times), and show that they have the following important property, which we also define.  
\begin{definition}\label{def:asyefficient}
A stopping time $\tau$ is called asymptotically efficient for a change point problem, if as $\beta \to \infty$
$$
\mathbb{E}_\infty [\tau] \geq \beta (1 + o(1)),
$$
and there exists a positive constant $C$ such that
$$
\mathbb{E}_1 [\tau] \leq C \log \beta (1 + o(1)). 
$$
\end{definition}
We note that most of the classical optimal algorithms in the literature are asymptotically efficient \cite{veer-bane-elsevierbook-2013}, \cite{poor-hadj-qcd-book-2009}, \cite{tart-niki-bass-2014}, while a trivial algorithm like $\tau \equiv \beta$ is not. Furthermore, according to fundamental limit theorems on change point detection \cite{lai-ieeetit-1998}, 
the performance of any stopping time cannot be of a smaller order of magnitude than $\log \beta (1 + o(1))$. Thus, being asymptotically efficient is an important property to have for a change detection algorithm. 
Comments on optimality with respect to the Pollak's criterion \eqref{eq:Pollak} or Lorden's criterion \cite{lord-amstat-1971} will be provided in an extended version of this paper. 

\vspace{-0.4cm}
\section{Algorithms for Anomaly Detection}\label{sec:Algo}
\vspace{-0.2cm}
The change detection model defined in \eqref{eq:CP_single} and \eqref{eq:CP_all} are similar to change point models studied in sensor network literature
\cite{tart-veer-fusion-2002}, \cite{mei-biometrica-2010}, \cite{tart-veer-sqa-2008}, where a change can affect one, or all the sensors. Observations from a
batch can be viewed as observations from a sensor. The important difference 
between our problem and the sensor network problem is that the decision maker here observes the data from batches in sequence, i.e., does not have access to all
the data at the same time. Nonetheless, the analogy between the two problems provides us with guidelines for identifying relevant algorithms for our problem. 
We will make some assumptions about the way change occurs to simplify our notations, algorithms, and analysis. Algorithms for more general change point models 
can be developed by following the techniques discussed below. 

\vspace{-0.4cm}
\subsection{Algorithm for Detecting Change in a Single Batch}
\vspace{-0.2cm}
We assume that after the change occurs in a single batch $e$, the post-change parameter $\lambda_k$ is the same for all the variables in the batch $e$. 
Since it is not known in which batch $e$ the change occurs, we execute $E$ algorithms, one for each batch, and raise an alarm as soon 
as any of the algorithms detect the change. Mathematically, define the following statistics for data from batch $e$:
\vspace{-0.3cm}
\begin{equation}\label{eq:StatSingle}\vspace{-0.3cm}
\begin{split}
W_n^e = \max_{1 \leq k \leq n} \sup_{\lambda \in \Lambda^{e}} \sum_{i=k: b(i)=e}^n \log [ p(Y_i; \lambda) / p(Y_i; \theta^{(e)}) ],
\end{split}
\end{equation}\vspace{-0.2cm}
where
\begin{equation}\label{eq:Lambda}
\Lambda^{e} = \{\lambda: |\lambda-\theta^{(e)}| \geq \epsilon\}.
\end{equation}
Also, define $\tau^e$ as the stopping time for the batch $e$:\vspace{-0.2cm}
\begin{equation}
\tau^e = \inf\{n \geq 1: W_n^e > A\}.\vspace{-0.1cm}
\end{equation}
Here, $\epsilon > 0$ is the minimum amount of change from the baseline parameter $\theta^{(e)}$ the algorithm can detect. Note that the condition ${i=k: b(i)=e}$ ensures that only data from the batch $e$ are utilized for computing the statistic $W_n^e$. 
Our change detection algorithm is the minimum of these stopping times. \vspace{-0.2cm}
\begin{equation}\label{stop:one}
\tau_o = \min_{1 \leq e \leq E} \; \tau^e.\vspace{-0.1cm}
\end{equation}
\vspace{-0.4cm}
\begin{theorem}\label{thm:single}
Suppose the post-change parameter space $\Lambda^{e}$ in \eqref{eq:Lambda} is finite. Then, the stopping time $\tau_o$ in \eqref{stop:one} is asymptotically efficient. 
\end{theorem}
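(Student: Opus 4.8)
The plan is to exploit the i.p.i.d.\ step structure to reduce each per-batch statistic to a classical Page CUSUM run on an i.i.d.\ subsequence, and then invoke the standard CUSUM false-alarm and delay estimates. First I would observe that, because only indices with $b(i)=e$ enter \eqref{eq:StatSingle} and the parameter is constant ($\theta^{(e)}$) throughout batch $e$ under the step model \eqref{eq:StepModel}, the observations $\{Y_i:b(i)=e\}$ form an i.i.d.\ sequence before the change and an i.i.d.\ sequence after it. Since $\Lambda^e$ is finite, the supremum over $\lambda$ in \eqref{eq:StatSingle} is a maximum over finitely many values, so $W_n^e=\max_{\lambda\in\Lambda^e}W_n^{e,\lambda}$, where each $W_n^{e,\lambda}$ is an ordinary CUSUM statistic with log-likelihood-ratio increments $Z_i^{e,\lambda}=\log\bigl(p(Y_i;\lambda)/p(Y_i;\theta^{(e)})\bigr)$. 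Consequently $\tau_o$ in \eqref{stop:one} is the first passage above $A$ of the maximum over the finite family of CUSUMs indexed by the $M=\sum_{e=1}^E|\Lambda^e|$ pairs $(e,\lambda)$.

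For the false-alarm bound I would argue as follows. For fixed $(e,\lambda)$ and fixed start $k$, the likelihood ratio $\Lambda_{k,n}^{e,\lambda}=\exp\bigl(\sum_{i=k,\,b(i)=e}^{n}Z_i^{e,\lambda}\bigr)$ is a nonnegative $\mathbb{P}_\infty$-martingale of unit mean, so Ville's maximal inequality gives $\mathbb{P}_\infty(\sup_{n\ge k}\Lambda_{k,n}^{e,\lambda}\ge e^A)\le e^{-A}$. Because $\{W_n^{e,\lambda}\ge A\text{ for some }n\le m\}\subseteq\bigcup_{k\le m}\{\sup_{n\ge k}\Lambda_{k,n}^{e,\lambda}\ge e^A\}$, a union bound over $k$ yields $\mathbb{P}_\infty(\tau^{e,\lambda}\le m)\le m\,e^{-A}$, and a further union bound over the $M$ pairs gives $\mathbb{P}_\infty(\tau_o\le m)\le M m\,e^{-A}$. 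Taking $m=\lfloor e^A/(2M)\rfloor$ forces $\mathbb{P}_\infty(\tau_o\le m)\le 1/2$, whence $\mathbb{E}_\infty[\tau_o]\ge m\,\mathbb{P}_\infty(\tau_o>m)\ge c\,e^A/M$ for a fixed constant $c>0$. Choosing the threshold $A=\log(M\beta/c)$ then gives $\mathbb{E}_\infty[\tau_o]\ge\beta$; since $M$ and $c$ are fixed constants, $A=\log\beta\,(1+o(1))$, as required by Definition~\ref{def:asyefficient}.

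For the delay bound I would use the fact that taking a minimum over stopping times can only speed up detection. Let $e$ be the affected batch and $\lambda^\ast\in\Lambda^e$ the true post-change parameter, which lies in $\Lambda^e$ because the detectable change is at least $\epsilon$ away from $\theta^{(e)}$; then $\tau_o\le\tau^{e,\lambda^\ast}$. Under $\mathbb{P}_1$ the increments $Z_i^{e,\lambda^\ast}$ restricted to batch $e$ are i.i.d.\ with positive mean equal to the Kullback--Leibler number $D(\lambda^\ast,\theta^{(e)})=\mathbb{E}_{\lambda^\ast}\!\bigl[Z^{e,\lambda^\ast}\bigr]>0$, so by Wald's identity, refined by nonlinear renewal theory to control the overshoot, the CUSUM crosses $A$ after about $A/D(\lambda^\ast,\theta^{(e)})$ batch-$e$ observations. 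Since batch $e$ contributes $|B_e|$ of every $T$ samples, accumulating that many batch-$e$ observations takes real time $\frac{T}{|B_e|}\cdot\frac{A}{D(\lambda^\ast,\theta^{(e)})}(1+o(1))$, giving $\mathbb{E}_1[\tau_o]\le C\log\beta\,(1+o(1))$ with $C=T/\bigl(|B_e|\,D(\lambda^\ast,\theta^{(e)})\bigr)$.

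The main obstacle is the false-alarm control of the \emph{minimum} of the $M$ CUSUMs: since $\mathbb{E}_\infty[\tau_o]\le\min_{e,\lambda}\mathbb{E}_\infty[\tau^{e,\lambda}]$, the individual ARL bounds do not transfer directly, and one must show that the union-bound loss costs only an additive $\log M$ in the threshold so that $A$ remains $\log\beta\,(1+o(1))$; the martingale maximal inequality above is precisely what keeps this loss benign, and the finiteness of $\Lambda^e$ is what makes $M$ a constant. A secondary technical point is the conversion from batch-$e$ observation count to real time and the control of the CUSUM overshoot at $A$ (handled by renewal theory), which together ensure the $o(1)$ terms in both bounds are genuinely negligible as $\beta\to\infty$; verifying that the true $\lambda^\ast$ belongs to the finite set $\Lambda^e$ is what guarantees a correctly tuned CUSUM exists in the family and that the delay constant $C$ involves the correct divergence.
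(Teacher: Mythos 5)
Your proposal is correct and follows essentially the same route as the paper: the false-alarm bound via martingale (Ville-type) arguments combined with a union bound over the finite family of per-$(e,\lambda)$ CUSUMs, giving $A=\log\beta\,(1+o(1))$, and the delay bound via renewal-theoretic analysis of the correctly tuned CUSUM with the real-time conversion factor $T/|B_e|$, which is exactly the paper's constant $\kappa=1+\sum_{f\neq e}|B_f|/|B_e|$ so that $\mathbb{E}_1[\tau_o]\leq \frac{\log\beta}{I(\lambda)}\kappa\,(1+o(1))$. You have simply filled in the details that the paper's sketch delegates to the cited references.
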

\vspace{-0.6cm}
\begin{proof}
The false alarm result is true because $\tau^e$ stochastically dominates Lorden's stopping time designed for pre-change parameter $\theta^{(e)}$.
The finite family assumption and martingale arguments imply setting $A = \log \beta (1+o(1))$ will ensure 
$\mathbb{E}_\infty[\tau_o] \geq \beta (1+o(1))$, as $\beta \to \infty$ \cite{tart-veer-fusion-2002} . 
For delay, it can be shown that if $\lambda$ is the true post-change parameter in batch $e$ then as $\beta \to \infty$,
$\mathbb{E}_1[\tau_o]  \leq \frac{\log(\beta) (1 + o(1))}{I(\lambda)} \kappa$,
where $\kappa = (1 + \sum_{f \neq e} |B_f|/|B_e|)$, and $I(\lambda)$ is the Kullback-Leibler divergence between $p(\cdot, \lambda)$ and $p(\cdot, \theta^{(e)})$, implying asymptotic efficiency. 
\end{proof}
\vspace{-0.4cm}
\subsection{Algorithm for Detecting Change in All the Batches}
\vspace{-0.2cm}
We assume that after the change occurs, the post-change parameter $\lambda_k$ is the same for all the variables in a batch $e$. 
Since the change occurs in all the batches, we use an algorithm that combines observations from all the batches. Mathematically, we
compute the statistic
\vspace{-0.2cm}
\begin{equation}\label{eq:StatisticAll}
W_n = \max_{1 \leq k \leq n} \sup_{\lambda^{(e)} \in \Lambda^{(e)}, \; e \leq E} \sum_{i=k}^n \log [ p(Y_i; \lambda^{(b(i))}) / p(Y_i; \theta^{(b(i))}) ],\vspace{-0.2cm}
\end{equation}
and declare an anomaly at the stopping time \vspace{-0.1cm}
\begin{equation}\label{stop:all}
\tau_a = \inf\{n \geq 1: W_n > A\}.\vspace{-0.1cm}
\end{equation}
\begin{theorem}\label{thm:all}
Suppose the post-change parameter space $\Lambda^{e}$ in \eqref{eq:Lambda} is finite. Then, the stopping time $\tau_a$ in \eqref{stop:all} is asymptotically efficient. 
\end{theorem}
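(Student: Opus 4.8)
The plan is to follow the same two-part template used for Theorem~\ref{thm:single}, adapting the martingale bound (for false alarms) and the renewal bound (for delay) to the single combined statistic $W_n$ of \eqref{eq:StatisticAll}. The structural observation that makes this work is that, for each fixed candidate change point $k$, the inner supremum separates across batches: each summand depends only on $\lambda^{(b(i))}$, and the parameters for distinct batches may be optimized independently, so
\[
\sup_{\lambda^{(1)},\dots,\lambda^{(E)}} \sum_{i=k}^n \log\frac{p(Y_i;\lambda^{(b(i))})}{p(Y_i;\theta^{(b(i))})} = \sum_{e=1}^E \sup_{\lambda^{(e)}\in\Lambda^{(e)}} \sum_{i=k:\,b(i)=e}^n \log\frac{p(Y_i;\lambda^{(e)})}{p(Y_i;\theta^{(e)})},
\]
where each $\lambda^{(e)}$ ranges over $\Lambda^{(e)}$. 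Thus $W_n$ is a generalized likelihood ratio CUSUM statistic driven by increments that are independent but periodically (rather than identically) distributed, and the analysis reduces to controlling one random walk instead of the minimum of $E$ walks.

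For the false alarm bound, I would fix an arbitrary post-change vector $\lambda=(\lambda^{(1)},\dots,\lambda^{(E)})$ and form the cumulative log-likelihood ratio $S_n(\lambda)=\sum_{i=1}^n \log[p(Y_i;\lambda^{(b(i))})/p(Y_i;\theta^{(b(i))})]$. Under the pre-change law $\mathbb{P}_\infty$ each exponentiated increment has unit conditional mean, so $\{\exp(S_n(\lambda))\}$ is a nonnegative unit-mean martingale. Because every $\Lambda^{(e)}$ is finite there are only finitely many such vectors $\lambda$; a union bound over this family combined with the optional stopping theorem controls the chance that any associated walk crosses level $A$ prematurely. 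Exactly as in the argument cited for Theorem~\ref{thm:single} \cite{tart-veer-fusion-2002}, this shows that setting $A=\log\beta(1+o(1))$ forces $\mathbb{E}_\infty[\tau_a]\geq\beta(1+o(1))$ as $\beta\to\infty$.

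For the delay, let $\lambda^\star=(\lambda^{\star(1)},\dots,\lambda^{\star(E)})$ be the true post-change parameters, which lie in the $\Lambda^{(e)}$ by the detectability assumption $|\lambda^{\star(e)}-\theta^{(e)}|\geq\epsilon$. Taking $k=1$ and $\lambda=\lambda^\star$ inside the max and sup gives the pointwise lower bound $W_n\geq S_n(\lambda^\star)$, so $\tau_a$ is dominated by the first passage time of $S_n(\lambda^\star)$ above $A$. Under $\mathbb{P}_1$ the increments of $S_n(\lambda^\star)$ are independent with batch-dependent means $I_{b(i)}$, where $I_e$ denotes the Kullback--Leibler divergence between $p(\cdot;\lambda^{\star(e)})$ and $p(\cdot;\theta^{(e)})$. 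Grouping the increments into consecutive blocks of length $T$ yields an i.i.d.\ random walk with per-period mean $\sum_{e=1}^E |B_e| I_e$, i.e.\ a per-sample drift of $\tfrac{1}{T}\sum_{e=1}^E |B_e| I_e$. A standard Wald/renewal estimate for the first passage time then gives
\[
\mathbb{E}_1[\tau_a] \leq \frac{T\log\beta}{\sum_{e=1}^E |B_e| I_e}\,(1+o(1)),
\]
which is of order $\log\beta$ and therefore establishes asymptotic efficiency with $C=T/\sum_{e=1}^E |B_e| I_e$.

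The step I expect to be most delicate is this delay estimate, since the increments of $S_n(\lambda^\star)$ are independent but not identically distributed owing to the i.p.i.d.\ structure. The fix is to pass to the period-indexed walk whose block increments are genuinely i.i.d., so that renewal theory applies directly; the overshoot at the boundary and the finite supremum over the family $\{\lambda\}$ are both $O(1)$ and are absorbed into the $o(1)$ term as $A=\log\beta\to\infty$. It is worth noting that, in contrast to Theorem~\ref{thm:single} where only the changed batch contributes useful evidence and the delay carries the factor $\kappa=T/|B_e|$, the combined statistic here accrues divergence from every batch at once, replacing $\kappa$ by the smaller quantity $T/\sum_{e} |B_e| I_e$.
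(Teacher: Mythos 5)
Your proposal is correct, and your delay analysis is essentially the paper's: both lower-bound $W_n$ by the cumulative log-likelihood ratio at the true post-change parameters (drop the max over $k$ and the sup over $\lambda$), observe that this is a random walk with periodically distributed increments, and invoke renewal-theoretic first-passage estimates (the paper cites \cite{wood-nonlin-ren-th-book-1982}); your device of blocking increments into periods of length $T$ to recover i.i.d.\ structure is a sensible way to make that step explicit, and your constant $T/\sum_e |B_e| I_e$ is consistent with, and more explicit than, what the paper states. Where you genuinely diverge is the false alarm bound. The paper first uses the separation of suprema to get $W_n \leq \sum_{e=1}^E W_n^e$, notes that $\{\sum_e W_n^e > A\}$ implies $\{\max_e W_n^e > A/E\}$, and then recycles Theorem~\ref{thm:single} applied at threshold $A/E$; this forces the choice $A = E\log\beta\,(1+o(1))$ and inflates the delay constant by a factor of $E$. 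You instead run the martingale and union-bound argument directly on the combined statistic, summing over the finite product family $\prod_e \Lambda^{(e)}$ of post-change vectors, which yields the tighter threshold $A=\log\beta\,(1+o(1))$ and hence a smaller constant $C$. Both routes establish asymptotic efficiency as defined in Definition~\ref{def:asyefficient} (the constant $C$ is not required to be optimal), but your version is sharper and arguably cleaner, at the cost of a union bound over a family of size $\prod_e |\Lambda^{(e)}|$ rather than $\sum_e |\Lambda^{(e)}|$ --- a difference that is absorbed into the $o(1)$ in either case.
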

\begin{proof}
Independence and separation of suprema over $\lambda^{(e)}$ gives $W_n \leq \sum_{e=1}^E W_n^e$. The false alarm result follows from the previous theorem because 
$\{\sum_{e=1}^E W_n^e > A\}$ implies $\{\max_{e=1}^E W_n^e > A/E\}$. 
For the delay analysis, note that removing the maximum operators gives $W_n \geq \sum_{e=1}^E \sum_{i=1: b(i)=e}^n \log [ p(Y_i; \lambda^{(e)}) / p(Y_i; \theta^{(e)}) ]$. 
Asymptotic efficiency follows because the latter's behavior is similar to that of a random walk and based on the arguments in \cite{wood-nonlin-ren-th-book-1982}. 
\end{proof}

\vspace{-0.4cm}
\section{Numerical Results and Conclusions}\label{sec:Numerical}
\vspace{-0.2cm}
We now apply the developed algorithm to the NYC data. Due to a paucity of space, the performance of the algorithm for simulated data will be reported elsewhere. 
We apply $\tau_a$ to the count data because the change appears to affect the entire day's data. In Fig.~\ref{fig:AllTestConcatenatedCount}, 
we have plotted the evolution of the test statistic $W_n$ for all the count data: person count, vehicle count, and the Instagram count. 
In the figure, the data for each modality is arranged in a concatenated fashion, with labeled segments separated via red vertical lines. 
Each day has $6598$ samples. To compute the statistic, we divided the data into four batches, with the first three batches being of length $1500$. We modeled the data as a sequence of Poisson random variables. We used the count data from Sept. 10 (one of the non-event days) to learn the averages of these Poisson random variables for each of the four batches. We assumed that there is only one post-change parameter per batch that is equal to twice the normal parameter (half the normal parameters for vehicles) for that batch. We then applied the test to all the four days of data. In Fig.~\ref{fig:InstTestConcatenatedCount}, we have replotted the test statistic applied to the Instagram counts. As seen from the figures, the algorithm detects the anomaly that occurs on Sept. 24 (event day). 
\begin{figure}
\centering
\hspace{0.5cm}
\begin{subfigure}{.25\textwidth}
  \centering
  \includegraphics[width=\linewidth]{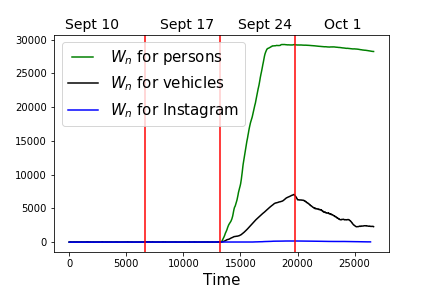}
  \caption{Test statistic $W_n$  for the \\ on-path camera.  }
  \label{fig:AllTestConcatenatedCount}
\end{subfigure}%
\begin{subfigure}{.25\textwidth}
  \centering
  \includegraphics[width=\linewidth]{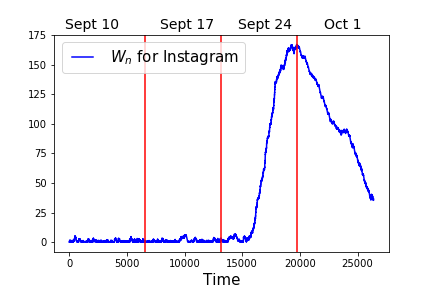}
  \caption{$W_n$ for Instagram \\ replotted.}
  \label{fig:InstTestConcatenatedCount}
\end{subfigure}
\caption{Plots of test statistic $W_n$ from  \eqref{eq:StatisticAll}.}
\label{fig:Numerical}
\end{figure}

In future, we will apply the algorithms to other multi-modal datasets to test their effectiveness. We will also study optimality of the proposed algorithms 
for Lorden's and Pollak's criteria. 

\clearpage
\newpage
\bibliographystyle{ieeetr}



\bibliography{QCD_verSubmitted}

\end{document}